\documentclass[a4paper, USenglish, numberwithinsect, cleveref]{lipics-v2019}

\usepackage{xparse, xspace, stmaryrd, paralist}

\newcommand{\Nat}{\mathbb N}
\newcommand{\restr}{{\restriction}}
\renewcommand{\phi}{\varphi}
\renewcommand{\epsilon}{\varepsilon}
\newcommand{\Pht}[1]{\mathit{Pht}_{#1}}
\newcommand{\pht}[3]{[#2]_{#1}^{#3}}
\newcommand{\Comp}{\mathit{Comp}}
\newcommand{\arr}{\mathbin{\to}}
\newcommand{\FV}{\mathit{FV}}
\newcommand{\unbound}{\ensuremath{\mathsf U}\xspace}
\newcommand{\wmsou}{{\upshape WMSO+}\unbound}
\newcommand{\msou}{{\upshape MSO+}\unbound}
\newcommand{\mso}{{\upshape MSO}\xspace}
\newcommand{\child}{\downarrow}%\curlywedgedownarrow}
\newcommand{\Pp}{{\mathcal P}}
\newcommand{\Tt}{{\mathcal T}}
\newcommand{\Vv}{{\mathcal V}}
\newcommand{\Xvar}{\mathsf{X}}
\newcommand{\Yvar}{\mathsf{Y}}
\newcommand{\dom}{\mathrm{dom}}
\newcommand{\rMax}{r_{\max}}
\newcommand{\unode}{u}
\newcommand{\wnode}{w}
\newcommand{\VALempty}{\nu_\emptyset}
\newcommand{\true}{\mathsf{tt}}
\newcommand{\false}{\mathsf{ff}}
\newcommand{\set}[1]{\{#1\}}
\newcommand{\roottree}{\mathit{root}}
\newcommand{\Alphabet}{\mathbb{A}}
\newcommand{\Bphabet}{\mathbb{B}}
\newcommand{\phimso}{\phi_\text{MSO}}
\newcommand{\phifo}{\phi_\text{FO}}

\bibliographystyle{plainurl}

\title{Compositionality of the \msou Logic}

\author{Paweł Parys}{Institute of Informatics, University of Warsaw, Poland}{parys@mimuw.edu.pl}{https://orcid.org/0000-0001-7247-1408}{}

\authorrunning{P. Parys}

\Copyright{Paweł Parys}

\ccsdesc[500]{Theory of computation~Logic and verification}

\keywords{Compositionality, \msou logic, boundedness}

\funding{Work supported by the National Science Centre, Poland (grant no.\ 2016/\allowbreak 22/\allowbreak E/\allowbreak ST6/\allowbreak 00041).}

\nolinenumbers %uncomment to disable line numbering

\hideLIPIcs  %uncomment to remove references to LIPIcs series (logo, DOI, ...), e.g. when preparing a pre-final version to be uploaded to arXiv or another public repository

\begin{document}

\maketitle

\begin{abstract}
	We prove that the \msou logic is compositional in the following sense: 
	whether an \msou formula holds in a tree $T$ depends only on \msou-definable properties of the root of $T$ and of subtrees of $T$ starting directly below the root.
	Another kind of compositionality follows: every \msou formula whose all free variables range only over finite sets of nodes 
	(in particular, whose all free variables are first-order) can be rewritten into an \mso formula having access to properties of subtrees definable by \msou sentences (without free variables).
\end{abstract}

\section{Introduction}\label{sec:intro}

	The \msou logic extends the \mso logic by the unbounding quantifier, \unbound~\cite{BojanczykU}.
	A formula using this quantifier, $\unbound \Xvar.\phi$, says that $\phi$ holds for arbitrarily large finite sets $\Xvar$.
	In this paper, we consider \msou formulae evaluated over infinite trees.
	
	The \msou logic was shown to be undecidable, already over infinite words~\cite{mso+u-undecid}.
	Nevertheless, some its fragments have decidable properties.
	Among them there is the weak fragment, \wmsou, where one can only quantify over finite sets~\cite{wmso+u-words,wmso+u-kaiser,wmso+u-trees}.
	The weak fragment can be also extended by the ``exists a branch'' quantifier~\cite{wmso+up}.
	Another fragment, decidable over infinite words, corresponds to $\omega BS$-automata~\cite{omegaBS}.
	
	In a previous paper~\cite{wmsou-schemes}, we show that \wmsou formulae can be evaluated over trees generated by higher-order recursion schemes.
	As an ingredient, we use there compositionality of the \wmsou logic.
	In the current note, we extend the compositionality results to the full \msou logic.

	Namely, we prove two facts.
	Firstly, we show that whether an \msou formula holds in a tree $T$ depends only on \msou-definable properties of the root of $T$ and of subtrees of $T$ starting directly below the root.
	Secondly, every \msou formula whose all free variables range only over finite sets of nodes (in particular, whose all free variables are first-order) 
	can be rewritten into an \mso formula having access to properties of subtrees definable by \msou sentences (without free variables).
	
	Analogous results hold for most logics, and were often used to obtain decidability results 
	(some selection:~\cite{FefermanVaught, Shelah, Lauchli, BlumensathColcombet, wmso+u-kaiser, wmso-model}).

\section{Preliminaries}\label{sec:prelim}

	The powerset of a set $X$ is denoted $\Pp(X)$.
	The domain of a function $f$ is denoted $\dom(f)$.
	When $f$ is a function, by $f[x\mapsto y]$ we mean the function that maps $x$ to $y$ and every other $z\in\dom(f)$ to $f(z)$.

\subparagraph{Trees.}

	We consider ordered trees of bounded arity.
	Fix some maximal arity $\rMax\in\Nat$.
	A \emph{tree domain} (a set of tree nodes) is a set $D\subseteq\set{1,\dots,\rMax}^*$ such that 
	if $\unode i\in D$ then $\unode\in D$, and if $\unode(i+1)\in D$ then $\unode i\in D$ (where $\unode\in\set{1,\dots,\rMax}^*$, $i\in\set{1,\dots,\rMax}$).
	A \emph{tree} over an alphabet $\Alphabet$ is a function $T\colon D\to \Alphabet$, for some tree domain $D$.
	The set of trees over an alphabet $\Alphabet$ and with maximal arity $\rMax$ is denoted $\Tt(A,\rMax)$.
	A node $\wnode$ is the \emph{$i$-th child} of $\unode$ if $\wnode=\unode i$.

\subparagraph{\msou.}

	For technical convenience, we use a syntax in which there are no first-order variables.
	It is easy to translate a formula from a more standard syntax to ours (at least when the maximal arity of considered trees is fixed).
	We assume an infinite set $\Vv$ of variables, which can be used to quantify over sets of tree nodes.
	In the syntax of \msou we have the following constructions:
	\begin{align*}
		\phi::=a(\Xvar)\mid
			 \Xvar\child_i \Yvar\mid
			 \Xvar\subseteq \Yvar\mid
			\phi_1\land\phi_2\mid
			\neg\phi'\mid
			\exists \Xvar.\phi'\mid
			\unbound \Xvar.\phi'
	\end{align*}
	where $a$ is a letter, $i\in\Nat_+$, and $\Xvar,\Yvar\in\Vv$.
	Free variables of a formula are defined as usual; in particular $\unbound\Xvar$ is a quantifier, hence it bounds the variable $\Xvar$.
	By $\FV(\phi)$ we denote the set of free variables of a formula $\phi$.

	The \mso logic is defined likewise, with the exception that the \unbound quantifier is disallowed.

	A \emph{valuation} in a tree $T$ is a function $\nu\colon\Vv\to\Pp(\dom(T))$ 
	(formally, we assume that $\nu$ is defined for all variables from $\Vv$; nevertheless, its value is meaningful only for free variables of a considered formula).
	
	The semantics of a formula $\phi$ in a tree $T$ under a valuation $\nu$ is defined as follows:
	\begin{compactitem}
	\item	$a(\Xvar)$ holds when every node in $\nu(\Xvar)$ is labeled by $a$,
	\item	$\Xvar\child_i \Yvar$ holds when both $\nu(\Xvar)$ and $\nu(\Yvar)$ are singletons,
		and the unique node in $\nu(\Yvar)$ is the $i$-th child of the unique node in $\nu(\Xvar)$,
	\item	$\Xvar\subseteq \Yvar$ holds when $\nu(\Xvar)\subseteq\nu(\Yvar)$,
	\item	$\phi_1\land\phi_2$ holds when both $\phi_1$ and $\phi_2$ hold,
	\item	$\neg\phi'$ holds when $\phi'$ does not hold,
	\item	$\exists\Xvar.\phi'$ holds when $\phi'$ holds under a valuation $\nu[\Xvar\mapsto X]$ for some set $X$ of nodes of $T$, and
	\item	$\unbound\Xvar.\phi'$ holds when for every $n\in\Nat$,
		$\phi'$ holds under a valuation $\nu[\Xvar\mapsto X_n]$ for some finite set $X_n$ of nodes of $T$ of cardinality at least $n$.
	\end{compactitem}
	We write $T,\nu\models\phi$ to denote that $\phi$ holds in $T$ under the valuation $\nu$.
	When $\phi$ is a sentence (i.e., does not have free variables), the valuation $\nu$ is irrelevant, and we simply write $T\models\phi$ instead.

	In order to see that our definition of \msou is not too poor, let us write a few example formulae.
	\begin{compactitem}
	\item	The fact that $\Xvar$ represents an empty set can be expressed as $\mathit{empty}(\Xvar)\equiv\forall\Yvar.\;X\subseteq Y$.
	\item	The fact that $\Xvar$ represents a set of size at least $2$
		can be expressed as $\mathit{big}(\Xvar)\equiv\exists \Yvar.(\Yvar\subseteq\Xvar\land\allowbreak\neg(\Xvar\subseteq \Yvar)\land\neg\mathit{empty}(\Yvar))$.
	\item	The fact that $\Xvar$ represents a singleton can be expressed as $\mathit{sing}(\Xvar)\equiv\neg\mathit{empty}(\Xvar)\land\neg\mathit{big}(\Xvar)$.
	\item	When we only consider trees of a fixed maximal arity $\rMax$,
		the fact that $\Xvar$ and $\Yvar$ represent singletons $\{x\},\{y\}$, respectively, such that $y$ is a child of $x$ can be expressed as
		\begin{align*}
			(\Xvar\child_1 \Yvar)\lor\dots\lor(\Xvar\child_{\rMax} \Yvar)\,,
		\end{align*}
		where $\phi_1\lor\phi_2$ stands for $\neg(\neg\phi_1\land\neg\phi_2)$.
	\item	Let $A=\{a_1,\dots,a_k\}$ be a finite set of letters.
		The fact every node in the set represented by $\Xvar$ has label in $A$ can be expressed as
		\begin{align*}
			\forall\Yvar.\big((\mathit{sing}(\Yvar)\land\Yvar\subseteq\Xvar)\arr(a_1(\Yvar)\lor\dots\lor a_k(\Yvar))\big)\,,
		\end{align*}
		where $\forall\Yvar.\phi$ stands for $\neg\exists\Yvar.\neg\phi$, and $\phi_1\arr\phi_2$ stands for $\neg(\phi_1\land\neg\phi_2)$.
	\end{compactitem}

\subparagraph{Subtrees.}

	For a tree $T$ and its node $\unode$, by $T\restr_\unode$ we denote the \emph{subtree} of $T$ starting at $\unode$, defined in the expected way.
	Moreover, when $\nu$ is a valuation in $T$, by $\nu\restr_\unode$ we denote its restriction to $T\restr_\unode$;
	namely, every variable $\Xvar$ is mapped to the set $\set{\wnode\mid \unode\wnode\in\nu(\Xvar)}$.

	The \emph{root tree} of $T$, denoted $\roottree(T)$, is the tree consisting only of the root of $T$
	(i.e., $\roottree(T)$ consists of a single node labeled by $T(\epsilon)$).
	For a valuation $\nu$, by $\roottree(\nu)$ we denote the appropriate restriction of $\nu$ ;
	it maps every variable $\Xvar$ to the set $\set{\epsilon}\cap\nu(\Xvar)$.

\section{Results}

	Our first theorem says that whether an \msou formula holds in a tree $T$ depends only
	on \msou-definable properties of the root of $T$ and of subtrees of $T$ starting directly below the root.

	\begin{theorem}\label{thm:main-1}
		Fix a finite alphabet $\Alphabet$ and a maximal arity $\rMax$.
		For every \msou formula $\phi$ there exists a finite set $\Omega$ of tuples of \msou formulae
		such that for every tree $T\in\Tt(\Alphabet,\rMax)$ and for every valuation $\nu$ in $T$,
		it is equivalent whether
		\begin{compactitem}
		\item	$T,\nu\models\phi$, and
		\item	for some tuple $(\phi_0,\phi_1,\dots,\phi_r)\in\Omega$ the root of $T$ has $r$ children, 
			and $\roottree(T),\roottree(\nu)\models\phi_0$, and $T\restr_i,\nu\restr_i\models \phi_i$ for all $i\in\set{1,\dots,r}$.
		\end{compactitem}
		Moreover, for every tuple $(\phi_0,\phi_1,\dots,\phi_r)\in\Omega$ and every $i\in\set{0,\dots,r}$
		it is the case that $\FV(\phi_i)\subseteq\FV(\phi)$.
	\end{theorem}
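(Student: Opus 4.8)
The plan is to proceed by induction on the structure of $\phi$, keeping exactly the invariant stated: $T,\nu\models\phi$ if and only if some tuple of $\Omega$ matches $(T,\nu)$ in the sense of the second bullet, together with the bound $\FV(\phi_i)\subseteq\FV(\phi)$. The engine common to all cases is the observation that every set $X\subseteq\dom(T)$ splits uniquely into its root part $\set\epsilon\cap X$ and its subtree parts $X_i=\set{\wnode\mid i\wnode\in X}$; these pieces may be chosen \emph{independently} of one another, and since $r\le\rMax$ the set $X$ is finite exactly when every $X_i$ is finite. The same splitting applies to $\nu$, with $\roottree(\nu)$ and the $\nu\restr_i$ as the pieces.

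For the atomic formulae the decomposition is immediate: $a(\Xvar)$ and $\Xvar\subseteq\Yvar$ hold in $T,\nu$ iff they hold in $\roottree(T),\roottree(\nu)$ and in every $T\restr_i,\nu\restr_i$, so $\Omega$ collects, for each $r\in\set{0,\dots,\rMax}$, the constant tuple of length $r+1$. For $\Xvar\child_i\Yvar$ I split on the location of the parent node: either it is the root, whence the child is the root of the $i$-th subtree (expressible via $\mathit{sing}$, $\mathit{empty}$ and ``being a root''), or it lies strictly inside one subtree $T\restr_j$, where the whole atom is checked while all other coordinates assert emptiness. For a conjunction I take the coordinatewise product, replacing matching-length tuples $(\phi_0,\dots,\phi_r)$ and $(\psi_0,\dots,\psi_r)$ by $(\phi_0\land\psi_0,\dots,\phi_r\land\psi_r)$; independence of the pieces makes ``some product tuple matches'' equivalent to ``a tuple from each factor matches''. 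For $\exists\Xvar.\phi'$ I push the quantifier into each coordinate, turning every tuple $(\phi_0,\dots,\phi_r)$ for $\phi'$ into $(\exists\Xvar.\phi_0,\dots,\exists\Xvar.\phi_r)$; independence of the choices of $\set\epsilon\cap X$ and of the $X_i$ shows a global witness exists iff a witness exists in each coordinate separately.

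Negation is the first place needing care, since ``no tuple matches'' is not a priori a positive matching condition. Here I first turn the tuples for $\phi'$ into a \emph{partition}: let $\Phi_0$ and $\Phi_{\mathrm{sub}}$ be the finite sets of formulae occurring as root, respectively as subtree, coordinates, and replace each set by its Boolean atoms $\bigwedge_{\psi\in S}\psi\land\bigwedge_{\psi\notin S}\neg\psi$ over subsets $S$. Over any single root, and over any single subtree, exactly one atom holds, so exactly one ``complete'' tuple assembled from atoms matches each $(T,\nu)$. Whether an original tuple for $\phi'$ matches then depends only on this complete tuple, by a purely syntactic test (an original coordinate holds iff it is a positive member $\psi\in S$ of the atom sitting at that coordinate); hence the complete tuples split into those forcing $\phi'$ and the rest, and for $\neg\phi'$ I output exactly the rest.

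The unbounding quantifier is the main obstacle. The enabling fact is that finiteness is \msou-definable, via $\mathit{fin}(\Xvar)\equiv\neg\unbound\Yvar.(\Yvar\subseteq\Xvar)$: a set has finite subsets of unbounded size precisely when it is infinite. Now $T,\nu\models\unbound\Xvar.\phi'$ means there are arbitrarily large \emph{finite} $X$ satisfying $\phi'$; as the induction hypothesis yields only finitely many tuples for $\phi'$, a pigeonhole argument reduces this to the existence of a \emph{single} tuple $(\phi_0,\dots,\phi_r)$ for $\phi'$ matched by arbitrarily large finite $X$. Fixing such a tuple and using independence, the attainable values $|X|=|\set\epsilon\cap X|+\sum_i|X_i|$ form the sumset of the sizes realizing $\phi_0$ at the root and each $\phi_i$ by a finite set in $T\restr_i$; since $r\le\rMax$, this sumset is unbounded iff all its summand-sets are nonempty and at least one is unbounded. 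Translating back, I output, for every tuple $(\phi_0,\dots,\phi_r)$ for $\phi'$ with $r\ge1$ and every distinguished subtree $i^*$, the tuple whose root coordinate is $\exists\Xvar.\phi_0$, whose $i^*$-th coordinate is $\unbound\Xvar.\phi_{i^*}$ (unboundedly many finite realizers, hence in particular one), and whose remaining coordinates are $\exists\Xvar.(\mathit{fin}(\Xvar)\land\phi_j)$ (at least one finite realizer). In every case the quantifier binds $\Xvar$ in all coordinates, so $\FV(\phi_i)\subseteq\FV(\phi)$ is preserved, and each $\Omega$ stays finite because $r$ is bounded by $\rMax$.
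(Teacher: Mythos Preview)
Your argument is correct, but it follows a genuinely different route from the paper.  The paper does not induct on~$\phi$ directly; instead it introduces, once and for all, a finite set $\Pht\phi$ of \emph{logical types} (phenotypes) together with a map $(T,\nu)\mapsto\pht{\phi}{T}{\nu}$, proves that this map is compositional (the type of $T$ is a fixed function $\Comp_{a,r,\phi}$ of the root label~$a$, the set $R=\{\Xvar\in\FV(\phi)\mid\epsilon\in\nu(\Xvar)\}$, and the types of the immediate subtrees), and observes that each type is itself definable by an \msou formula.  The set $\Omega$ is then read off in one step: for every type $\tau$ with $\mathit{tv}_\phi(\tau)=\true$ one lists all tuples $(R,\tau_1,\dots,\tau_r)$ with $\Comp_{a,r,\phi}(R,\tau_1,\dots,\tau_r)=\tau$ and outputs the corresponding formulae.

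The two proofs are close in spirit.  Your Boolean-atoms construction in the negation case is precisely the place where you build types implicitly, and your pigeonhole for~$\unbound$ matches the paper's computation of the second component of the $\exists/\unbound$-type.  What the paper's route buys is modularity: the same phenotype machinery also yields Theorem~\ref{thm:main-2} with almost no extra work, and negation is free because types already partition all pairs $(T,\nu)$.  What your route buys is directness: no auxiliary semantic object is introduced, and the per-connective transformations of~$\Omega$ (coordinatewise product, pushing $\exists$ through, the $\mathit{fin}$-guarded $\unbound$ split) are concrete and self-contained.
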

	
	Notice that the formulae $\phi_0$ evaluated in the root are necessarily very simple: 
	they can only read the root's label, and check which variables among $\FV(\phi)$ are mapped to sets containing the root.

	Our second theorem says that every \msou formula whose free variables range only over finite sets of nodes 
	can be rewritten into an \mso formula having access to properties of subtrees definable by \msou sentences.

	We say that a valuation $\nu$ is \emph{finitary} if it maps every variable to a finite set of nodes.

	An \emph{\msou relabeling} is given by a tuple of \msou sentences $\Psi=(\psi_b)_{b\in\Bphabet}$ for a finite set $\Bphabet$.
	Suppose that we have a tree $T$ such that for every subtree $T\restr_\unode$ of $T$ there is exactly one $b\in\Bphabet$ for which $T\restr_\unode\models\psi_b$.
	The relabeling applied to such a tree produces a tree $\Psi(T)$ with the same domain as $T$, over the alphabet $\Bphabet$, 
	where every node $\unode\in\dom(T)$ gets labeled by that $b\in\Bphabet$ for which $T\restr_\unode\models\psi_b$
	(for trees $T$ not satisfying the above assumption, $\Psi(T)$ is undefined).
	
	\begin{theorem}\label{thm:main-2}
		Fix a finite alphabet $\Alphabet$ and a maximal arity $\rMax$.
		For every \msou formula $\phi$ there exists a formula $\phimso$ of \mso, and an \msou relabeling $\Psi$ 
		such that for every tree $T\in\Tt(\Alphabet,\rMax)$ and for every \emph{finitary} valuation $\nu$ in $T$,
		the tree $\Psi(T)$ is defined, and it is equivalent whether
		\begin{compactitem}
		\item	$T,\nu\models\phi$, and
		\item	$\Psi(T),\nu\models \phimso$.
		\end{compactitem}
		Moreover, $\FV(\phimso)\subseteq\FV(\phi)$.
	\end{theorem}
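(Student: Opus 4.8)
The plan is to derive \cref{thm:main-1} into a finite, self-contained system of \msou types and then exploit that the valuation is finitary. First I would close the formula $\phi$ under the one-step decomposition of \cref{thm:main-1}: let $\Theta$ be the least set of \msou formulae that contains $\phi$ and, for every $\psi\in\Theta$, contains all the component formulae $\phi_0,\phi_1,\dots,\phi_r$ occurring in the tuples of the set $\Omega$ obtained by applying \cref{thm:main-1} to $\psi$. All these formulae have free variables inside $\FV(\phi)$. The purpose of $\Theta$ is that it is compositional for a single descent: by \cref{thm:main-1}, for each $\psi\in\Theta$ whether $T,\nu\models\psi$ is determined by the label of the root, by which variables of $\FV(\phi)$ contain the root, and by which formulae of $\Theta$ hold in each $(T\restr_i,\nu\restr_i)$. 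Hence the \emph{$\Theta$-type} of a pair $(T,\nu)$, namely the set $\{\psi\in\Theta \mid T,\nu\models\psi\}$, is computed by a fixed finite function $\delta$ from the root information and the tuple of $\Theta$-types of the immediate subtrees.

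Next I would use that $\nu$ is finitary. Let $S$ be the finite union of the sets $\nu(\Xvar)$ over $\Xvar\in\FV(\phi)$, and let the \emph{skeleton} $K$ be the set of all ancestors (inclusive) of the nodes of $S$; it is a finite subtree containing the root and is \mso-definable from the free variables. For every node $w\notin K$ we have $\nu\restr_w=\VALempty$, so the $\Theta$-type of $(T\restr_w,\VALempty)$ depends only on $T\restr_w$; moreover, evaluating a formula $\psi$ of $\Theta$ under the empty valuation coincides with evaluating the \msou \emph{sentence} $\exists\Xvar_1.\dots\exists\Xvar_k.(\mathit{empty}(\Xvar_1)\land\dots\land\mathit{empty}(\Xvar_k)\land\psi)$, where $\Xvar_1,\dots,\Xvar_k$ are the free variables of $\psi$. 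I therefore let the relabeling $\Psi$ attach to every node $w$ its \emph{empty-valuation $\Theta$-type} $\{\psi\in\Theta \mid T\restr_w,\VALempty\models\psi\}$: each possible type $b$ is singled out by the \msou sentence $\psi_b$ that is the corresponding Boolean combination of the sentences above, and since the types partition all trees, $\Psi(T)$ is always defined.

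The \mso formula $\phimso$ over $\Psi(T)$ then reconstructs $T,\nu\models\phi$. Using $\nu$ it defines $S$ and $K$, guesses (by a single block of set quantifiers encoding the finitely many possible $\Theta$-types as a coloring) an assignment of a $\Theta$-type to every node, and checks local consistency at each $u\in K$: the guessed type of $u$ must equal $\delta$ applied to the label of $u$, to the membership of $u$ in the variables of $\FV(\phi)$ (read off $\nu$), and to the types of the children of $u$ — where for a child in $K$ the guessed type is used, and for a child off $K$ its type is read from the label of $\Psi(T)$, which equals its $\Theta$-type since its valuation is empty. Because $\delta$ is a function and all the types feeding the computation from below are pinned down — those of off-$K$ children by $\Psi$, and those at the leaves of $T$ by $\delta$ applied to the root information alone — the consistency constraints force the guessed types to be the true ones, so the type guessed at the root is the genuine $\Theta$-type of $(T,\nu)$; finally $\phimso$ accepts exactly when that type contains $\phi$. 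This is expressible in \mso over $\Psi(T)$ and uses only the variables of $\FV(\phi)$, giving $\FV(\phimso)\subseteq\FV(\phi)$.

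The main obstacle is the finiteness of the closure $\Theta$: the decomposition of \cref{thm:main-1} is iterated to unbounded depth along $K$, so the argument collapses unless only finitely many formulae arise, and both $\delta$ and the relabeling alphabet must be finite. I would establish this by working modulo logical equivalence inside a fixed finite pool. Since $\Alphabet$, $\rMax$ and $\FV(\phi)$ are fixed, there are, up to equivalence, only finitely many \msou formulae of quantifier rank at most any fixed bound $q$; the usual induction on $q$ goes through for the \unbound quantifier, because the semantics of $\unbound\Xvar.\phi'$ depends only on the set of pairs satisfying $\phi'$, so \unbound respects logical equivalence of its body. It then suffices to verify that the component formulae produced by \cref{thm:main-1} can be taken of quantifier rank not exceeding that of the formula they decompose, which I would read off the construction underlying \cref{thm:main-1}; choosing $\Theta$ to consist of representatives of the finitely many equivalence classes of bounded rank makes the closure finite, and the construction above goes through.
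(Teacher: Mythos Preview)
Your overall plan coincides with the paper's proof: both precompute, via an \msou relabeling, the ``type under the empty valuation'' of every subtree, and then let an \mso formula guess a type at each node and verify, using the compositionality function, that the guesses are consistent and that they agree with the relabeling at nodes below the support of $\nu$. So structurally you are doing the right thing.

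The gap is exactly where you flag it: the finiteness of $\Theta$. Your proposed fix---show that the component formulae from \cref{thm:main-1} can be taken of quantifier rank at most that of the decomposed formula---does not go through. Look at the atomic formula $\Xvar\child_i\Yvar$ (rank $0$). The compositionality lemma needs, for each child $j$, to distinguish the cases $\nu\restr_j(\Xvar)=\nu\restr_j(\Yvar)=\emptyset$ and $\nu\restr_j(\Xvar)=\emptyset,\ \nu\restr_j(\Yvar)=\{\epsilon\}$ from all others; over a one-letter alphabet no rank-$0$ formula in $\Xvar,\Yvar$ separates, say, $\nu(\Yvar)=\{\epsilon\}$ from $\nu(\Yvar)=\{\text{a deeper node}\}$, because every atomic formula in $\Xvar,\Yvar$ takes the same value on both. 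Hence the child formulae must have rank $\geq 1$, and iterating \cref{thm:main-1} as a black box gives no uniform rank bound. What does work---and is what the paper does---is to bypass the closure entirely and use the $\phi$-types of \cref{prop:pht-finite,prop:form-2-pht}: the finite set $\{\psi_\tau:\tau\in\Pht\phi\}$ is already compositional (by \cref{lem:compositionality}) and finite by construction, so it serves as your $\Theta$ without any rank bookkeeping.

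Two smaller points. First, your relabeling records only the empty-valuation $\Theta$-type; the consistency check $\delta$ also needs the original letter, so the relabeling should carry the pair $(a,\tau)$, as in the paper. Second, if all free variables are empty your skeleton $K$ is empty and does not contain the root; take $K$ to be the ancestor-closure of $S\cup\{\epsilon\}$.
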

	
	\begin{remark}
		Both theorems above are constructive:
		knowing $\phi$, $\Alphabet$, and $\rMax$ one can compute either $\Omega$, or $\phimso$, respectively.
		The algorithm can be read out of our proof of existence, presented in the next sections.
	\end{remark}
	
	\begin{remark}
		Colcombet~\cite{forward-ramseyan} has shown that every formula $\phi$ of \mso can be rewritten into a formula $\phifo$ of first-order logic 
		referring to an \mso relabeling of a considered tree, analogously to our \cref{thm:main-2}
		(in his result, however, formulae used in the relabeling to relabel a node $\unode$ should be able to access the whole tree with the node $\unode$ marked, 
		instead of just the subtree rooted at $\unode$, as in our definition).
		Combining our \cref{thm:main-2} with this result,
		we can deduce that every \msou formula $\phi$ can be rewritten into a formula $\phifo$ of first-order logic 
                referring to an \msou relabeling of a considered tree (under the aforementioned extended definition of relabeling).
	\end{remark}

\section{Logical types and Theorem~\ref{thm:main-1}}

	In this section we prove our first result, \cref{thm:main-1}.
	To this end, we introduce logical types (aka.~phenotypes).
	These types contain more information than just the truth value of a formula.
	In consequence, types are compositional (as stated in \cref{lem:compositionality}),
	unlike truth values of formulae.

	In the sequel we assume that a finite alphabet $\Alphabet$ and a maximal arity $\rMax$ are fixed.
	Let $\phi$ be a formula of \msou, let $T$ be a tree, and let $\nu$ be a valuation.
	We define the \emph{$\phi$-type} of $T$ under valuation $\nu$, denoted $\pht{\phi}{T}{\nu}$, by induction on the size of $\phi$ as follows:
	\begin{compactitem}
	\item	if $\phi$ is of the form $a(\Xvar)$ (for some letter $a$) or $\Xvar\subseteq\Yvar$ then $\pht{\phi}{T}{\nu}$ is the logical value of $\phi$ in $T,\nu$, that is,
		$\true$ if $T,\nu\models\phi$ and $\false$ otherwise,
	\item	if $\phi$ is of the form $\Xvar\child_i \Yvar$, then $\pht{\phi}{T}{\nu}$ equals
		\begin{compactitem}
		\item	$\true$ if $T,\nu\models\phi$,
		\item	$\mathsf{empty}$ if $\nu(\Xvar)=\nu(\Yvar)=\emptyset$,
		\item	$\mathsf{root}$ if $\nu(\Xvar)=\emptyset$ and $\nu(\Yvar)=\{\epsilon\}$, and
		\item	$\false$ otherwise,
		\end{compactitem}
	\item	if $\phi\equiv(\psi_1\land\psi_2)$, then $\pht{\phi}{T}{\nu}=(\pht{\psi_1}{T}{\nu},\pht{\psi_2}{T}{\nu})$,
	\item	if $\phi\equiv(\neg\psi)$, then $\pht{\phi}{T}{\nu}=\pht{\psi}{T}{\nu}$,
	\item	if $\phi\equiv\exists\Xvar.\psi$ or $\phi\equiv\unbound \Xvar.\psi$, then
		\begin{align*}
			\pht{\phi}{T}{\nu}=\big(&\big\{\sigma\mid\exists X.\pht{\psi}{T}{\nu[\Xvar\mapsto X]}=\sigma\big\},\\
				&\big\{\sigma\mid\forall n\in\Nat.\exists X.\big(\pht{\psi}{T}{\nu[\Xvar\mapsto X]}=\sigma\land n\leq|X|<\infty\big)\big\}\big)\,,
		\end{align*}
		where $X$ ranges over sets of nodes of $T$.
	\end{compactitem}

	For each $\phi$, let $\Pht\phi$ denote the set of all potential $\phi$-types.
	Namely, $\Pht\phi=\{\true,\false\}$ in the first case,
	$\Pht\phi=\{\true, \mathsf{empty}, \mathsf{root}, \false\}$ in the second case,
	$\Pht\phi=\Pht{\psi_1}\times\Pht{\psi_2}$ in the third case,
	$\Pht\phi=\Pht{\psi}$ in the fourth case, and
	$\Pht\phi=(\Pp(\Pht\psi))^2$ in the fifth case.

	The following three propositions can be shown by a straightforward induction on the structure of a considered formula.
	
	\begin{proposition}\label{prop:pht-finite}
		For every \msou formula $\phi$ the set $\Pht\phi$ is finite.
	\qed\end{proposition}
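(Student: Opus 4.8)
The plan is to prove the statement by structural induction on the formula $\phi$, following exactly the five-way case split that was used to \emph{define} $\Pht\phi$. Since each clause of that definition describes $\Pht\phi$ either outright or in terms of $\Pht\psi$ (resp.\ $\Pht{\psi_1}$, $\Pht{\psi_2}$) for strictly smaller subformulae, the induction is on the size of $\phi$, and in every non-atomic case the induction hypothesis applies to the immediate subformulae.

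For the two base cases, the set $\Pht\phi$ is given explicitly: when $\phi$ is of the form $a(\Xvar)$ or $\Xvar\subseteq\Yvar$ it equals $\set{\true,\false}$, and when $\phi$ is of the form $\Xvar\child_i\Yvar$ it equals $\set{\true,\mathsf{empty},\mathsf{root},\false}$. Both are finite by inspection. For the inductive cases I would simply observe that each defining operation preserves finiteness. Concretely: in the conjunction case $\Pht\phi=\Pht{\psi_1}\times\Pht{\psi_2}$ is a Cartesian product of two finite sets, hence finite; in the negation case $\Pht\phi=\Pht{\psi}$ is literally unchanged, hence finite; and in the quantifier case (both $\exists\Xvar.\psi$ and $\unbound\Xvar.\psi$) we have $\Pht\phi=(\Pp(\Pht\psi))^2$, where the powerset of the finite set $\Pht\psi$ is finite and so is its square.

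I do not expect any genuine obstacle here: the whole argument reduces to the elementary facts that finite sets are closed under Cartesian products and under taking powersets, which is precisely why the paper flags the three propositions as provable by \emph{a straightforward induction}. The only point worth stating carefully is that the recursion defining $\Pht\phi$ descends to proper subformulae, so that the hypothesis is always invoked at a strictly smaller formula and the induction is well-founded.
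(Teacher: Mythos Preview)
Your proposal is correct and matches the paper's approach exactly: the paper merely states that the result follows by a straightforward induction on the structure of $\phi$, and you have spelled out precisely that induction. There is nothing to add.
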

	
	\begin{proposition}\label{prop:pht-2-form}
		For every \msou formula $\phi$ there is a function $\mathit{tv}_\phi\colon\Pht\phi\to\{\true,\false\}$ such that
		for every tree $T\in\Tt(\Alphabet,\rMax)$ and every valuation $\nu$ in $T$, it holds that $\mathit{tv}_\phi(\pht{\phi}{T}{\nu})=\true$ if and only if $T,\nu\models\phi$.
	\qed\end{proposition}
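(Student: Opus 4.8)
The plan is to proceed by induction on the structure of $\phi$, defining the function $\mathit{tv}_\phi$ case by case so that its definition mirrors the corresponding clause in the definition of $\pht{\phi}{T}{\nu}$; correctness then follows from the semantics of each construction together with the induction hypotheses for the immediate subformulae.

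First I would dispatch the easy cases. For the atomic formulae $a(\Xvar)$ and $\Xvar\subseteq\Yvar$ the type already \emph{is} the truth value, so $\mathit{tv}_\phi$ is just the identity on $\{\true,\false\}$. For $\Xvar\child_i\Yvar$ the type records extra information, but by construction it equals $\true$ precisely when $T,\nu\models\phi$, so I set $\mathit{tv}_\phi(\true)=\true$ and send the three other values to $\false$. The Boolean cases are equally direct: for $\phi\equiv\psi_1\land\psi_2$ the type is the pair $(\pht{\psi_1}{T}{\nu},\pht{\psi_2}{T}{\nu})$, so I put $\mathit{tv}_\phi(\sigma_1,\sigma_2)=\mathit{tv}_{\psi_1}(\sigma_1)\land\mathit{tv}_{\psi_2}(\sigma_2)$; for $\phi\equiv\neg\psi$ the type is unchanged, so I put $\mathit{tv}_\phi(\sigma)=\neg\,\mathit{tv}_\psi(\sigma)$. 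In both cases correctness is immediate from the semantic clauses for $\land$ and $\neg$.

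The quantifier cases are where the type is a pair $(S_1,S_2)$ of \emph{sets} of $\psi$-types: $S_1$ lists the types realized by \emph{some} choice of $X$, and $S_2$ lists those realized by arbitrarily large \emph{finite} choices of $X$. For $\phi\equiv\exists\Xvar.\psi$ I would set $\mathit{tv}_\phi(S_1,S_2)=\true$ iff some $\sigma\in S_1$ has $\mathit{tv}_\psi(\sigma)=\true$; by the induction hypothesis this exactly captures the existence of an $X$ with $T,\nu[\Xvar\mapsto X]\models\psi$. For $\phi\equiv\unbound\Xvar.\psi$ I would instead use the second component, setting $\mathit{tv}_\phi(S_1,S_2)=\true$ iff some $\sigma\in S_2$ has $\mathit{tv}_\psi(\sigma)=\true$.

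The hard part will be verifying correctness in the $\unbound$ case. One direction is routine: if some $\sigma\in S_2$ has $\mathit{tv}_\psi(\sigma)=\true$, then by definition of $S_2$ there are, for every $n$, finite sets $X$ of size at least $n$ with $\pht{\psi}{T}{\nu[\Xvar\mapsto X]}=\sigma$, each of which witnesses $\psi$, so $T,\nu\models\unbound\Xvar.\psi$. For the converse I would invoke \cref{prop:pht-finite}: given $T,\nu\models\unbound\Xvar.\psi$, choose for each $n$ a finite witness $X_n$ with $|X_n|\ge n$; since only finitely many $\psi$-types exist, the pigeonhole principle yields a single type $\sigma$ attained by $X_n$ for infinitely many $n$, and those $X_n$ are then arbitrarily large, so $\sigma\in S_2$ and $\mathit{tv}_\psi(\sigma)=\true$. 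This appeal to finiteness of the type set via pigeonhole is the only genuinely nontrivial step of the induction.
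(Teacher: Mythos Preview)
Your proposal is correct and follows exactly the approach the paper indicates: the paper merely states that the proposition ``can be shown by a straightforward induction on the structure of a considered formula'' and omits all details, and your case analysis is precisely that induction carried out explicitly, including the one nontrivial point (the pigeonhole argument via \cref{prop:pht-finite} for the $\unbound$ case) that the paper leaves implicit.
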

	
	In other words, the fact whether $\phi$ holds in $T,\nu$ is determined by $\pht{\phi}{T}{\nu}$.
	On the other hand, the $\phi$-type can be computed by an \msou formula:
	
	\begin{proposition}\label{prop:form-2-pht}
		For every \msou formula $\phi$ and every $\tau\in\Pht\phi$ there is an \msou formula $\psi_\tau$ with $\FV(\psi_\tau)\subseteq\FV(\phi)$ such that
		for every tree $T\in\Tt(\Alphabet,\rMax)$ and every valuation $\nu$ in $T$, it holds that $\pht{\phi}{T}{\nu}=\tau$ if and only if $T,\nu\models\psi_\tau$.
	\qed\end{proposition}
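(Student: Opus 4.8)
The plan is to induct on the structure of $\phi$, in lockstep with the recursive definition of $\pht{\phi}{T}{\nu}$: for each shape of $\phi$ and each target type $\tau\in\Pht\phi$ I will assemble $\psi_\tau$ from the formulae that the induction hypothesis provides for the immediate subformulae. In the two leaf cases $a(\Xvar)$ and $\Xvar\subseteq\Yvar$ the type is simply the truth value, so I take $\psi_\true\equiv\phi$ and $\psi_\false\equiv\neg\phi$. For $\phi\equiv\Xvar\child_i\Yvar$ there are four target types, and I set $\psi_\true\equiv\phi$, $\psi_{\mathsf{empty}}\equiv\mathit{empty}(\Xvar)\land\mathit{empty}(\Yvar)$, $\psi_{\mathsf{root}}\equiv\mathit{empty}(\Xvar)\land\mathit{sing}(\Yvar)\land\rho(\Yvar)$ (where $\rho(\Yvar)$ is the standard \mso formula asserting that the unique node of $\Yvar$ is the root, i.e.\ has no parent), and $\psi_\false\equiv\neg(\psi_\true\lor\psi_{\mathsf{empty}}\lor\psi_{\mathsf{root}})$. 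These capture the four clauses exactly, since the clauses are mutually exclusive, and they introduce only bound auxiliary variables, so their free variables stay inside $\set{\Xvar,\Yvar}=\FV(\phi)$.

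For the Boolean connectives the argument is pure bookkeeping. If $\phi\equiv\psi_1\land\psi_2$, then $\pht{\phi}{T}{\nu}$ is the pair of component types, so for $\tau=(\tau_1,\tau_2)$ I put $\psi_\tau$ to be the conjunction of the induction-hypothesis formulae for $\psi_1$ at $\tau_1$ and for $\psi_2$ at $\tau_2$; its free variables lie in $\FV(\psi_1)\cup\FV(\psi_2)=\FV(\phi)$. If $\phi\equiv\neg\psi$, then $\pht{\phi}{T}{\nu}=\pht{\psi}{T}{\nu}$, so I simply reuse the formula already built for $\psi$, and $\FV$ is unchanged.

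The quantifier case is the crux, although it too is essentially immediate once the two coordinates of the type are matched to the two quantifiers. Write $\phi\equiv Q\Xvar.\psi$ with $Q\in\set{\exists,\unbound}$, so that a type $\tau\in(\Pp(\Pht\psi))^2$ is a pair $(A,B)$. By the induction hypothesis, for each $\sigma\in\Pht\psi$ there is a formula $\psi_\sigma$ with $\FV(\psi_\sigma)\subseteq\FV(\psi)$ and $T,\nu'\models\psi_\sigma$ iff $\pht{\psi}{T}{\nu'}=\sigma$. The statement $\sigma\in A$, that some $X$ realises the type $\sigma$, is expressed by $\exists\Xvar.\psi_\sigma$; the statement $\sigma\in B$, that arbitrarily large finite sets $X$ realise $\sigma$, is expressed by $\unbound\Xvar.\psi_\sigma$, because the object-level \unbound quantifier is defined to range exactly over finite sets of unbounded cardinality, which mirrors the clause defining $B$. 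Hence I set
\[
\psi_\tau\equiv\bigwedge_{\sigma\in A}\exists\Xvar.\psi_\sigma\,\land\,\bigwedge_{\sigma\notin A}\neg\exists\Xvar.\psi_\sigma\,\land\,\bigwedge_{\sigma\in B}\unbound\Xvar.\psi_\sigma\,\land\,\bigwedge_{\sigma\notin B}\neg\unbound\Xvar.\psi_\sigma\,.
\]
Since $\Xvar$ is bound by these quantifiers, $\FV(\psi_\tau)\subseteq\FV(\psi)\setminus\set{\Xvar}\subseteq\FV(\phi)$, as required. The only genuine observation in the whole proof is this identification of the existential and unbounding coordinates of the type with the two object-level quantifiers; everything else is routine.
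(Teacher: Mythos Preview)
Your proof is correct and follows exactly the approach the paper indicates: the paper merely states that the proposition ``can be shown by a straightforward induction on the structure of a considered formula'' and omits all details, and your case analysis faithfully fills these in. The only point worth making explicit is that the big conjunctions in the quantifier case are finite because $\Pht\psi$ is finite (\cref{prop:pht-finite}), but this is implicit in your write-up.
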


	Next, we observe that types behave in a compositional way, as formalized below.

	\begin{lemma}\label{lem:compositionality}
		For every letter $a$, every $r\in\Nat$, and every \msou formula $\phi$,
		one can compute a function $\Comp_{a,r,\phi}\colon\Pp(\FV(\phi))\times(\Pht\phi)^r\to\Pht\phi$ such that
		for every tree $T$ whose root has label $a$ and $r$ children, and for every valuation $\nu$,
		\begin{align*}
			\pht{\phi}{T}{\nu}=\Comp_{a,r,\phi}(\{\Xvar\in\FV(\phi)\mid\epsilon\in\nu(\Xvar)\},\pht{\phi}{T\restr_1}{\nu\restr_1},\dots,\pht{\phi}{T\restr_r}{\nu\restr_r})\,.
		\end{align*}
	\end{lemma}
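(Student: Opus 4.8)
The plan is to prove the lemma by induction on the structure of $\phi$, mirroring the inductive definition of $\pht{\phi}{T}{\nu}$; in each case I build $\Comp_{a,r,\phi}$ out of the root-information $V=\{\Xvar\in\FV(\phi)\mid\epsilon\in\nu(\Xvar)\}$ and the subtree types. The two Boolean cases are immediate. For $\phi\equiv\neg\psi$ the type is unchanged, so I take $\Comp_{a,r,\phi}=\Comp_{a,r,\psi}$ (note $\FV(\phi)=\FV(\psi)$). For $\phi\equiv\psi_1\land\psi_2$ a subtree type is a pair $(\sigma^1,\sigma^2)\in\Pht{\psi_1}\times\Pht{\psi_2}$, and I apply the inductive $\Comp_{a,r,\psi_1}$ to the first coordinates across the subtrees and $\Comp_{a,r,\psi_2}$ to the second, feeding each the restriction $V\cap\FV(\psi_k)$ of the root-information, and pairing the two results.

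For the atomic cases the type is a truth value (resp.\ one of four values), and the point is that membership of a node in a variable splits according to whether the node is the root or lies in a subtree. For $a(\Xvar)$ the whole-tree type is $\true$ iff the root contributes no violation (i.e.\ $\Xvar\notin V$, or the root label $a$ equals the letter tested by the formula) and every subtree type is $\true$; the case $\Xvar\subseteq\Yvar$ is analogous, with root condition ``$\Xvar\in V\Rightarrow\Yvar\in V$''. The only delicate atomic case is $\Xvar\child_i\Yvar$, and this is where the auxiliary values $\mathsf{empty}$ and $\mathsf{root}$ earn their keep: the whole-tree type is $\true$ either (A)~when the marked edge leaves the root, detected by $\Xvar\in V$, $\Yvar\notin V$, the $i$-th subtree having type $\mathsf{root}$, and every other subtree having type $\mathsf{empty}$; or (B)~when it lies strictly inside one subtree, detected by $\Xvar,\Yvar\notin V$, exactly one subtree having type $\true$, and all others $\mathsf{empty}$. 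The whole-tree values $\mathsf{empty}$ and $\mathsf{root}$ are read off analogously (all subtrees $\mathsf{empty}$, together with the appropriate membership of $\Yvar$ in $V$), and every remaining configuration gives $\false$; I would simply tabulate $\Comp_{a,r,\Xvar\child_i\Yvar}$ from this case distinction.

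The quantifier case $\phi\equiv\exists\Xvar.\psi$ (identical for $\unbound\Xvar.\psi$, as the two share a type) is the heart of the argument. Every set $X$ of nodes of $T$ decomposes uniquely into the root bit $b=[\epsilon\in X]$ and the restrictions $X_i=X\restr_i$, and by the induction hypothesis the whole-tree $\psi$-type under $\nu[\Xvar\mapsto X]$ equals $\Comp_{a,r,\psi}(V_b,(\pht{\psi}{T\restr_i}{\nu\restr_i[\Xvar\mapsto X_i]})_{i})$, where $V_b$ adds $\Xvar$ to $V$ exactly when $b=1$ and $\Xvar\in\FV(\psi)$. Since the subtrees are pairwise independent, any tuple of individually achievable subtree $\psi$-types is achievable simultaneously; hence the first coordinate of $\pht{\phi}{T}{\nu}$ is exactly $\{\Comp_{a,r,\psi}(V_b,(\sigma_i)_i)\mid b\in\{0,1\},\ \sigma_i\text{ in the first coordinate of subtree }i\}$, which depends only on the data I am allowed to use.

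The second coordinate is where I expect the main obstacle. Here $X$ must be finite, so all $X_i$ are finite and $|X|=b+\sum_i|X_i|$ with $b\le1$; thus unbounded size must be supplied by the subtrees. As $\Pht\psi$ is finite by \cref{prop:pht-finite}, a pigeonhole over the finitely many root-bit/profile pairs lets me fix a single profile $(\sigma_i)_i$ with $\Comp_{a,r,\psi}(V_b,(\sigma_i)_i)=\sigma$ that is witnessed by finite $X$ of arbitrarily large total size. Such a profile is realizable in this way iff every $\sigma_i$ is achievable by \emph{some} finite $X_i$ and at least one $\sigma_j$ is achievable by \emph{arbitrarily large} finite $X_j$: fixing finite witnesses in the coordinates $i\neq j$ and blowing up the $j$-th coordinate yields the required unbounded families, and conversely a pigeonhole over the $r$ subtrees shows that some coordinate must grow unboundedly. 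The ``arbitrarily large finite'' condition is precisely the second coordinate of subtree $j$'s type; the delicate point—and the step on which I would spend the most effort—is the ``some finite'' condition, i.e.\ recovering from a subtree's type which $\psi$-types are achievable by a finite (as opposed to an arbitrary, possibly infinite) set, and then correctly combining one unboundedly-growing subtree with finitely-realized contributions in the others. Once this finiteness bookkeeping is pinned down, assembling $\Comp_{a,r,\phi}$ for the quantifier is routine, and the induction closes.
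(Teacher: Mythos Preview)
Your proposal follows the paper step for step: the same structural induction, the same treatment of the atomic and Boolean cases, and the same decomposition in the quantifier case (root bit plus subtree restrictions, with a pigeonhole argument locating one subtree that carries the unbounded growth). For the second coordinate you correctly state that one subtree must contribute a $\psi$-type from its own second component $\rho_j$, while each remaining $\sigma_i$ must be realizable by a \emph{finite} $X_i$.

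The point you single out as ``the delicate step''---recovering, from the subtree $\phi$-type $(\tau_i,\rho_i)$, which $\psi$-types are realizable by a finite $X_i$---is precisely where the paper's argument is thinnest. The paper does not track this separately: its set $B$ draws the non-growing coordinates from $\tau_i$ (realizable by \emph{some} set, possibly infinite) and justifies only the forward inclusion via pigeonhole. For the reverse inclusion one needs finite witnesses in every coordinate, which membership in $\tau_i$ alone does not guarantee; e.g.\ with $\psi\equiv\unbound\Yvar.\,\Yvar\subseteq\Xvar$ and $T$ a root with two infinite-path children, the paper's $B$ admits the profile combining a finite $X_1$ with $X_2=\dom(T\restr_2)$, and $\Comp_{a,2,\psi}$ then produces the $\psi$-type $(\{\true,\false\},\{\true,\false\})$, which is not realized by any finite $X$ in $T$ and hence should not appear in the second coordinate of $\pht{\phi}{T}{\nu}$. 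So you have not overlooked a trick present in the paper; you have put your finger on a genuine gap that the paper's write-up does not close. A clean repair is to enrich the $\phi$-type with a third component recording the $\psi$-types achievable by \emph{some} finite $X$; with that extra datum the induction closes exactly along the lines you outline.
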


	\begin{proof}
		We proceed by induction on the size of $\phi$.

		When $\phi$ is of the form $b(\Xvar)$ or $\Xvar\subseteq \Yvar$, then we see that $\phi$ holds in $T,\nu$ if and only if it holds in every subtree $T\restr_i,\nu\restr_i$ and in the root of $T$.
		Thus, for $\phi\equiv b(\Xvar)$ as $\Comp_{a,r,\phi}(R,\tau_1,\dots,\tau_r)$ we take $\true$ when $\tau_i=\true$ for all $i\in\{1,\dots,r\}$ and either $a=b$ or $\Xvar\not\in R$.
		For $\phi\equiv(\Xvar\subseteq \Yvar)$ the last part of the condition is replaced by ``if $\Xvar\in R$ then $\Yvar\in R$''.

		Next, suppose that $\phi\equiv(\Xvar\child_k \Yvar)$.
		Then as $\Comp_{a,r,\phi}(R,\tau_1,\dots,\tau_r)$ we take
		\begin{compactitem}
		\item	$\true$ if $\tau_j=\true$ for some $j\in\{1,\dots,r\}$, and $\tau_i=\mathsf{empty}$ for all $i\in\{1,\dots,r\}\setminus\{j\}$, and $\Xvar\not\in R$, and $\Yvar\not\in R$,
		\item	$\true$ also if $\tau_k=\mathsf{root}$, and $\tau_i=\mathsf{empty}$ for all $i\in\{1,\dots,r\}\setminus\{k\}$, and $\Xvar\in R$, and $\Yvar\not\in R$,
		\item	$\mathsf{empty}$ if $\tau_i=\mathsf{empty}$ for all $i\in\{1,\dots,r\}$, and $\Xvar\not\in R$, and $\Yvar\not\in R$,
		\item	$\mathsf{root}$ if $\tau_i=\mathsf{empty}$ for all $i\in\{1,\dots,r\}$, and $\Xvar\not\in R$, and $\Yvar\in R$, and
		\item	$\false$ otherwise.
		\end{compactitem}
		By comparing this definition with the definition of the type we immediately see that the thesis is satisfied.

		When $\phi\equiv(\neg\psi)$, we simply take $\Comp_{a,r,\phi}=\Comp_{a,r,\psi}$, 
		and when $\phi\equiv(\psi_1\land\psi_2)$, as $\Comp_{a,r,\phi}(R,(\tau_1^1,\tau_1^2),\dots,(\tau^1_r,\tau^2_r))$ 
		we take the pair of $\Comp_{a,r,\psi_i}(R\cap\FV(\phi_i),\tau_1^i,\dots,\tau^i_r)$ for $i\in\{1,2\}$.

		Finally, suppose that $\phi\equiv\exists\Xvar.\psi$ or $\phi\equiv\unbound\Xvar.\psi$.
		The arguments of $\Comp_{a,r,\phi}$ are pairs $(\tau_1,\rho_1),\dots,\allowbreak(\tau_r,\rho_r)$.
		Let $A$ be the set of tuples $(\sigma_1,\dots,\sigma_r)\in\tau_1\times\dots\times\tau_r$,
		and let $B$ be the set of tuples $(\sigma_1,\dots,\sigma_r)$ such that $\sigma_j\in\rho_j$ for some $j\in\{1,\dots,r\}$ and $\sigma_i\in\tau_i$ for all $i\in\{1,\dots,r\}\setminus\{j\}$.
		As $\Comp_{a,r,\phi}(R,(\tau_1,\rho_1),\dots,(\tau_r,\rho_r))$ we take
		\begin{align*}
			(&\{\Comp_{a,r,\psi}(R\cup\{\Xvar\},\sigma_1,\dots,\sigma_r),\Comp_{a,r,\psi}(R\setminus\{\Xvar\},\sigma_1,\dots,\sigma_r)\mid(\sigma_1,\dots,\sigma_r)\in A\},\\
			 &\{\Comp_{a,r,\psi}(R\cup\{\Xvar\},\sigma_1,\dots,\sigma_r),\Comp_{a,r,\psi}(R\setminus\{\Xvar\},\sigma_1,\dots,\sigma_r)\mid(\sigma_1,\dots,\sigma_r)\in B\})\,.
		\end{align*}
		The two possibilities, $R\cup\{\Xvar\}$ and $R\setminus\{\Xvar\}$, correspond to the fact that when quantifying over $\Xvar$,
		the root of $T$ may be either taken to the set represented by $\Xvar$ or not.
		The second coordinate is computed correctly due to the pigeonhole principle: if for every $n$ we have a set $X_n$ of cardinality at least $n$ (satisfying some property),
		then we can choose an infinite subsequence of these sets such that either the root belongs to all of them or to none of them,
		and one can choose some $j\in\{1,\dots,r\}$ such that the sets contain unboundedly many descendants of $j$.
	\end{proof}
	
	Now \cref{thm:main-1} follows easily:
	
	\begin{proof}[Proof of \cref{thm:main-1}]
		Let $\phi$ be the \msou formula under consideration.
		We should define a set $\Omega$ of tuples of \msou formulae.
		To this end, consider the set $\Phi\subseteq\Pht\phi$ containing those $\phi$-types for which $\phi$ is true, 
		that is, $\phi$-types $\tau$ such that $\mathit{tv}_\phi(\tau)=\true$, where $\mathit{tv}_\phi$ is the function defined in \cref{prop:pht-2-form}.
		Next, for every $\phi$-type $\tau\in\Phi$, for every letter $a\in\Alphabet$ (root's label), and for every $r\in\set{0,\dots,\rMax}$ (number of root's children)
		consider all tuples $(R,\tau_1,\dots,\tau_r)\in\Pp(\FV(\phi))\times(\Pht\phi)^r$ 
		such that $\Comp_{a,r,\phi}(R,\tau_1,\dots,\tau_r)=\tau$, where $\Comp_{a,r,\phi}$ is the function defined in \cref{lem:compositionality}.
		For every such a tuple, we add to $\Omega$ a tuple $(\eta_{a,R},\psi_{\tau_1},\dots,\psi_{\tau_r})$, where
		\begin{align*}
			\eta_{a,R}\equiv
				\forall\Yvar.\,a(\Yvar)\land
				\bigwedge\nolimits_{\Xvar\in R}(\forall\Yvar.\;\Yvar\subseteq\Xvar)\land
				\bigwedge\nolimits_{\Xvar\in\FV(\phi)\setminus R}\neg (\forall\Yvar.\;\Yvar\subseteq\Xvar)
		\end{align*}
		and where $\psi_{\tau_i}$ are the formulae corresponding to types $\tau_i$, as defined in \cref{prop:form-2-pht}.
		Because there are finitely many possibilities for $a$, $r$, and $R$, and finitely many $\phi$-types (cf.~\cref{prop:pht-finite}), the set $\Omega$ is finite.
		
		Consider now a particular tree $T\in\Tt(\Alphabet,\rMax)$, and a valuation $\nu$ in $T$.
		Let $a$ be the label of the root of $T$, let $r$ be the number of root's children, and let $R=\set{\Xvar\in\FV(\phi)\mid\epsilon\in\nu(\Xvar)}$.
		Moreover, let $\tau=\pht{\phi}{T}{\nu}$, and for $i\in\set{1,\dots,r}$ let $\tau_i=\pht{\phi}{T\restr_i}{\nu\restr_i}$.
		By \cref{lem:compositionality} we have that $\tau=\Comp_{a,r,\phi}(T,\tau_1,\dots,\tau_r)$.
		
		Suppose first that $T,\nu\models\phi$.
		By \cref{prop:pht-2-form} this implies that $\tau\in\Phi$, and hence $(\eta_{a,R},\psi_{\tau_1},\dots,\psi_{\tau_r})\in\Omega$.
		We see that $\roottree(T),\roottree(\nu)\models\eta_{a,R}$, and that $T\restr_i,\nu\restr_i\models\psi_{\tau_1}$ for all $i\in\set{1,\dots,r}$ (by \cref{prop:form-2-pht}),
		which gives the thesis.
		
		Conversely, suppose that for some tuple $(\eta_{a',R'},\psi_{\tau_1'},\dots,\psi_{\tau_r'})\in\Omega$ it is the case that
		$\roottree(T),\roottree(\nu)\models\eta_{a',R'}$ and $T\restr_i,\nu\restr_i\models\psi_{\tau_1'}$ for all $i\in\set{1,\dots,r}$.
		We see that necessarily $a'=a$, $R'=R$, and $\tau_i'=\tau_i$ for all $i\in\set{1,\dots,r}$ (by \cref{prop:form-2-pht}).
		Thus, actually $(\eta_{a,R},\psi_{\tau_1},\dots,\psi_{\tau_r})\in\Omega$, which implies that $\tau\in\Phi$, and in consequence $T,\nu\models\phi$, by \cref{prop:pht-2-form}.
	\end{proof}

\section{Proof of Theorem~\ref{thm:main-2}}

	In this section we prove our second result, \cref{thm:main-2}.
	Recall that our goal is to decompose a formula $\phi$ into an \msou relabeling $\Psi$ and an \mso formula $\phimso$,
	assuming that free variables of $\phi$ are valuated to finite sets.
	
	The idea here is that the finite top part of a tree, where all the set variables are valuated, can be handled by \mso 
	(intuitively: in a finite part nothing can be unbounded, so the \unbound quantifier is void here, and hence it can be eliminated).
	The remaining part of the tree consists of subtrees in which all variables are valuated to empty sets;
	the $\phi$-type of every such a subtree is fixed, so it can be precomputed and written in the label of the root of that subtree.
	
	Again, in this section we assume that $\Alphabet$ and $\rMax$ are fixed.
	Let $\VALempty$ be the \emph{empty valuation}, mapping every variable to the empty set.
	We need formulae computing $\phi$-types under the assumption that the valuation is empty.

	\begin{proposition}\label{prop:form-2-pht-empty}
		For every \msou formula $\phi$ and every $\tau\in\Pht\phi$ there is an \msou sentence $\psi_\tau^\emptyset$ such that
		for every tree $T\in\Tt(\Alphabet,\rMax)$ it holds that $\pht{\phi}{T}{\VALempty}=\tau$ if and only if $T\models\psi_\tau^\emptyset$.
	\end{proposition}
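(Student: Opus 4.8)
The plan is to reduce this to \cref{prop:form-2-pht}, which already gives us formulae $\psi_\tau$ detecting the $\phi$-type under an \emph{arbitrary} valuation, and then specialize them to the empty valuation. The key observation is that the empty valuation $\VALempty$ is itself \msou-definable: a valuation maps every free variable to the empty set precisely when the sentence $\bigwedge_{\Xvar\in\FV(\phi)}\mathit{empty}(\Xvar)$ holds, where $\mathit{empty}$ is the formula $\forall\Yvar.\,\Xvar\subseteq\Yvar$ from the preliminaries. So I would first take the formula $\psi_\tau$ from \cref{prop:form-2-pht}, with $\FV(\psi_\tau)\subseteq\FV(\phi)$, and then set
\begin{align*}
    \psi_\tau^\emptyset \equiv \forall\Xvar_1.\,\dots\,\forall\Xvar_k.\Big(\big(\bigwedge\nolimits_{j=1}^{k}\mathit{empty}(\Xvar_j)\big)\arr\psi_\tau\Big)\,,
\end{align*}
where $\{\Xvar_1,\dots,\Xvar_k\}=\FV(\phi)\supseteq\FV(\psi_\tau)$. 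Universally quantifying over all the free variables turns $\psi_\tau$ into a sentence, and the guard forces each quantified variable to be the empty set, so the truth of $\psi_\tau^\emptyset$ in $T$ coincides with the truth of $\psi_\tau$ under the specific valuation $\VALempty$.

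For correctness, observe that since every free variable of $\psi_\tau$ is among $\Xvar_1,\dots,\Xvar_k$, the universally quantified formula holds in $T$ if and only if $\psi_\tau$ holds under every valuation assigning empty sets to all of $\Xvar_1,\dots,\Xvar_k$. There is exactly one such valuation on $\FV(\psi_\tau)$ (namely $\VALempty$ restricted to the free variables), and the values on the remaining quantified variables are irrelevant to $\psi_\tau$. Hence $T\models\psi_\tau^\emptyset$ iff $T,\VALempty\models\psi_\tau$, which by \cref{prop:form-2-pht} is equivalent to $\pht{\phi}{T}{\VALempty}=\tau$, exactly as required.

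Alternatively, and perhaps more cleanly, one can prove the statement directly by induction on $\phi$, mirroring the construction in \cref{prop:form-2-pht} but tracking types only under $\VALempty$; the atomic and boolean cases are immediate, and for the quantifier cases one uses that the fifth clause in the definition of the type already quantifies over arbitrary (respectively unbounded finite) sets $X$, so no free variable needs to be supplied from outside. I expect no serious obstacle here: the only point requiring a little care is bookkeeping the free-variable containment $\FV(\psi_\tau)\subseteq\FV(\phi)$ so that the universal closure genuinely eliminates all free variables and yields a sentence, and confirming that the guard $\mathit{empty}(\Xvar_j)$ pins down the empty set uniquely. Both are routine given the example formulae already developed in the preliminaries.
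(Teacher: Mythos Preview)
Your proposal is correct and follows essentially the same approach as the paper: start from the formula $\psi_\tau$ of \cref{prop:form-2-pht}, conjoin with $\mathit{empty}(\Xvar)$ for each free variable, and close off into a sentence. The only cosmetic difference is that the paper uses the existential form $\exists\Xvar_1\ldots\exists\Xvar_k.\big(\bigwedge_j\mathit{empty}(\Xvar_j)\land\psi_\tau\big)$ rather than your universal-with-guard form; since the empty set is unique, the two are interchangeable.
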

	
	\begin{proof}
		By \cref{prop:form-2-pht} we have a formula $\psi_\tau$ checking that the $\phi$-type is $\tau$ under a given valuation.
		We obtain $\psi_\tau^\emptyset$ from $\psi_\tau$ by making a conjunction with statements of the form $\mathit{empty}(\Xvar)$ for all free variables $\Xvar$, 
		and then surrounding the formula by quantifiers $\exists\Xvar$ for all free variables $\Xvar$.
	\end{proof}
	
	We now prove \cref{thm:main-2}.
	
	\begin{proof}[Proof of \cref{thm:main-2}]
		Beside of the sentences $\psi_\tau^\emptyset$ from \cref{prop:form-2-pht-empty},
		for every $a\in\Alphabet$ we consider a sentence $\eta_a$ saying that the root of a tree is labeled by $a$.
		As the relabeling we take $\Psi=(\eta_a\cap\psi_\tau^\emptyset)_{(a,\tau)\in\Alphabet\times\Pht\phi}$.
		
		The formula $\phimso$ starts with a sequence of $|\Pht{\phi}|$ existential quantifiers,
		quantifying over variables $\Xvar_\tau$ for all $\tau\in\Pht{\phi}$.
		The intention is that, in a tree $T$, every $\Xvar_\tau$ represents the set of nodes $\unode$ such that $\pht{\phi}{T\restr_\unode}{\nu\restr_\unode}=\rho$.
		Inside the quantification we say that
		\begin{compactitem}
		\item	the sets represented by these variables are disjoint, and every node belongs to some of them,
		\item	the root belongs to $\Xvar_\tau$ for some $\tau$ such that $\mathit{tv}_\phi(\tau)=\true$, where $\mathit{tv}_\phi$ is the function defined in \cref{prop:pht-2-form},
		\item	if a node with label $(a,\tau_\emptyset)$ belongs to $\Xvar_\tau$, and its children belong to $\Xvar_{\tau_1},\dots,\Xvar_{\tau_r}$, respectively
			(where $r\leq\rMax$), and $R$ is the set of free variables $\Yvar$ of $\phi$ for which the node belongs to $\nu(\Yvar)$, then
			$\tau=\Comp_{a,r,\phi}(R,\tau_1,\dots,\tau_r)$
			(there are only finitely many possibilities for $\tau,\tau_\emptyset,\tau_1,\dots,\tau_r\in\Pht\phi$, for $r\in\{0,\dots,\rMax\}$, for $a\in\Alphabet$,
			and finitely many free variables of $\phi$, thus the constructed formula can be just a big alternative listing all possible cases), and
		\item	if a node with label $(a,\tau_\emptyset)$ belongs to $\Xvar_\tau$ and none of $\nu(\Yvar)$ for $\Yvar$ free in $\phi$ contains this node or some its descendant,
			then $\tau=\tau_\emptyset$.
		\end{compactitem}
		Consider now a tree $T\in\Tt(\Alphabet,\rMax)$ and a valuation $\nu$ in this tree.
		If $\pht{\phi}{T}{\nu}=\tau$, then we can show that $\phimso$ is true
		by taking for $\Xvar_\tau$ the set of nodes $\unode$ for which $\pht{\phi}{T\restr_\unode}{\nu\restr_\unode}=\tau$ (for every $\tau\in\Pht{\phi}$).
		Conversely, suppose that $\phimso$ is true.
		Then we can prove that a node $\unode$ can belong to the set represented by $\Xvar_\tau$ (for $\tau\in\Pht{\phi}$) only when $\pht{\phi}{T\restr_\unode}{\nu\restr_\unode}=\tau$.
		The proof is by a straightforward induction on the number of descendants of $\unode$ that belong to $\nu(\Yvar)$ for some $\Yvar$ free in $\phi$;
		we use \cref{lem:compositionality} for the induction step.
	\end{proof}

\bibliography{bib}

\end{document}